%
%
\documentclass[conference, a4paper]{IEEEtran}

\usepackage{setspace, dsfont, enumerate}
\usepackage{amssymb, amsmath, mathrsfs, amsthm, stmaryrd, upgreek, mathtools}
\usepackage{pgfplots, caption, tikz, subfigure, graphicx}
\usepackage{url, hyperref, cite}
\usepackage{MyMnSymbol}

\AtBeginDocument{
  \addtolength\abovedisplayskip{-2.2pt}
  \addtolength\belowdisplayskip{-2.2pt}
}

\linespread{0.996}

\input{def.tex}

\begin{document}

\title{Density Criteria for the Identification of Linear Time-Varying Systems}

\author{\IEEEauthorblockN{C\'eline Aubel and Helmut B\"olcskei \medskip}
\IEEEauthorblockA{Dept.~IT~\&~EE, ETH Zurich, Switzerland\\
Email: \{aubelc, boelcskei\}@nari.ee.ethz.ch}
}

\maketitle

\newcommand{\doppler}{\nu}
\newcommand{\delay}{\tau}
\newcommand{\idx}{k}
\newcommand{\idxSet}{\Omega}
\newcommand{\meas}{\mu}
\newcommand{\norm}[1]{\left\|#1\right\|}
\newcommand{\bargmann}{\mathcal{B}}
\newcommand{\bargmannSpace}{\mathcal{F}(\C)}
\newcommand{\realFunc}{f}
\newcommand{\complexFuncOne}{\varphi}
\newcommand{\gaussian}{g}
\newcommand{\sig}{x}
\newcommand{\STFT}{\mathcal{V}_\gaussian}
\newcommand{\linOp}{\mathcal{H}}
\newcommand{\linOpBis}{\mathcal{K}}
\newcommand{\linOpSpace}[1]{\mathscr{H}(#1)}
\newcommand{\discreteSet}{\Xi}

\linespread{1}

\begin{abstract} \boldmath
	This paper addresses the problem of identifying a linear time-varying (LTV) system characterized by a (possibly infinite) discrete set of delays and Doppler shifts. We prove that stable identifiability is possible if the upper uniform Beurling density of the delay-Doppler support set is strictly smaller than $1/2$ and stable identifiability is impossible for densities strictly larger than $1/2$. The proof of this density theorem reveals an interesting relation between LTV system identification and interpolation in the Bargmann-Fock space. Finally, we introduce a subspace method for solving the system identification problem at hand.
\end{abstract}

\section{Introduction}

Identification of deterministic linear time-varying (LTV) systems has been a topic of long-standing interest, dating back to the seminal work by Kailath~\cite{Kailath1962} and Bello~\cite{Bello1969}, and has seen significant renewed interest during the past decade~\cite{Kozek2005,Heckel2013,Bajwa2011,Heckel2014}.
The problem arises in numerous application areas such as radar imaging and wireless communication. The formal problem statement is as follows. We want to identify the LTV system $\linOp$ from its response
\begin{equation}
	\forall t \in \R, \quad (\linOp\sig)(t) \triangleq \sum_{k \in \idxSet} a_k \sig(t - \delay_k)e^{-2\pi i\doppler_k t}
	\label{eq: radio communication channel}
\end{equation}
to the probing signal $\sig$, where $(\delay_k, \doppler_k)$ are delay-Doppler shift parameters, $a_k$ are the corresponding complex weights, and $\idxSet$ is a possibly infinite index set.

\textit{Contributions:} The purpose of this paper is to establish fundamental limits on the stable identifiability of $\linOp$ in \eqref{eq: radio communication channel} in terms of $\{a_k, \delay_k, \doppler_k\}_{k \in \idxSet}$. Our approach is based on the following insight. Defining the complex discrete measure $\mu_\linOp \triangleq \sum_{k \in \idxSet} a_k \delta_{\delay_k, \doppler_k}$ on $\R^2$, the input-output relation \eqref{eq: radio communication channel} can be rewritten as
\begin{equation*}
	\forall t \in \R, \quad (\linOp\sig)(t) = \int_{\R^2} \sig(t - \delay)e^{-2\pi i\doppler t}\dmeas{\meas_\linOp}{\delay, \doppler}.
\end{equation*}
Identifying the system $\linOp$ thus amounts to recovering the discrete measure $\mu_\linOp$ from $(\linOp\sig)(t)$, $t \in \R$. 
This formulation reveals an interesting connection to the super-resolution problem as studied by Donoho~\cite{Donoho1991}, where the goal is to recover a complex discrete measure on $\R$ (a weighted Dirac train) from lowpass measurements. The problem at hand, albeit formally similar, differs in two important aspects. First, we want to recover a measure $\mu_\linOp$ on $\R^2$, that is, a measure on a \emph{two-dimensional} set, from observations in \emph{one} parameter, namely $(\linOp\sig)(t)$, $t \in \R$. Second, the lowpass observations in~\cite{Donoho1991} are replaced by short-time Fourier transform-type observations, where the probing signal $\sig$ appears as the window function. These differences make for quite different technical challenges. Nevertheless, we can follow the spirit of Donoho's work~\cite{Donoho1991}, who established necessary and sufficient conditions for unique recovery in the super-resolution problem. These conditions are expressed in terms of the uniform Beurling densities of the measure's support set and are derived using density theorems for interpolation in the Bernstein and Paley-Wiener spaces~\cite{Beurling1989-2} and for balayage of Fourier-Stieltjes transforms~\cite{Beurling1989-1}.
Here, we will use a density theorem for interpolation in the Bargmann-Fock space~\cite{Seip1992, Seip1992-1, Seip1992-2, Seip1992-3}. Our main result says that stable identifiability is possible if the upper uniform Beurling density of the set $\{(\delay_k, \doppler_k)\}_{k \in \idxSet}$ is strictly smaller than $1/2$ and stable identifiability is impossible for densities strictly larger than $1/2$. Moreover, we present a subspace method for recovering the parameters $\{a_k, \delay_k, \doppler_k\}_{k \in \idxSet}$ from $\linOp\sig$ when $\sig$ is a Gaussian signal.

\textit{Relation to prior work:} Kozek and Pfander showed in~\cite{Kozek2005} that Gabor frame operators on rectangular lattices $a^{-1}\Z \times b^{-1}\Z$ are identifiable if and only if $ab \leq 1$. This problem is equivalent to the LTV system identification problem considered here for the pairs $(\delay_k, \doppler_k)$ lying on the lattice $a^{-1}\Z \times b^{-1}\Z$. The identifiability condition in \cite{Kozek2005} holds irrespectively of whether the set $\{(\delay_k, \doppler_k)\}_{k \in \idxSet}$ occupies the entire lattice or just parts of it, i.e., the result does not take into account the density of $\{(\delay_k, \doppler_k)\}_{k \in \idxSet}$ across $\R^2$. The results in \cite{Heckel2013} pertain to the identification of LTV systems with compactly supported spreading function and do not encompass operators defined by~\eqref{eq: radio communication channel}.
In \cite{Grip2013}, a necessary condition for identifiability of a set of Hilbert-Schmidt operators defined by atomic decompositions is given; the condition is expressed in terms of a ``2-dimensional'' Beurling density, but the operator class considered does not contain the operators characterized by \eqref{eq: radio communication channel}. In \cite{Bajwa2011}, it is shown that parametric underspread LTV systems, i.e., LTV systems with delay-Doppler spread product smaller than $1$, can be identified provided that the time-bandwidth product of the probing signal is large enough. In~\cite{Heckel2014}, a convex programming algorithm for stable recovery of the triplets $\{a_k, \delay_k, \doppler_k\}_{k \in \idxSet}$ from a finite number of noisy samples of $\linOp\sig$ is proposed; the algorithm assumes that the points $\{(\delay_k, \doppler_k)\}_{k \in \idxSet}$ obey a minimum separation condition.

\textit{Notation:} The complex conjugate of $z \in \C$ is denoted by~$\overline{z}$. For a Hilbert space $H$, we write $\innerProd{\cdot}{\cdot}_H$ and $\norm{\cdot}_H$ for the inner product and norm on $H$, respectively. 
Linear operators are denoted by uppercase calligraphic letters. For $\doppler \in \R$, we define the modulation operator $(\mathcal{M}_\doppler\sig)(t) \triangleq \sig(t)e^{-2\pi i\doppler t}$, and for $\delay \in \R$, the translation operator $(\mathcal{T}_\delay\sig)(t) \triangleq \sig(t - \delay)$. For a measure space $(X, \Sigma, \mu)$ and a measurable function $\varphi \colon X \rightarrow  \C$, we write $\int_X \varphi(x) \dmeas{\mu}{x}$ for the integral of $\varphi$ with respect to $\mu$, and we set $\mathrm{d}x \triangleq \dmeas{\lambda}{x}$ if $\lambda$ is the Lebesgue measure.  
If $X$ is a topological space, $\mathcal{B}(X)$ designates the Borel $\sigma$-algebra over $X$ and $\measSpace{X}$ is the space of all complex Radon measures on $(X, \mathcal{B}(X))$. For $x \in X$, $\delta_x \in \measSpace{X}$ denotes the Dirac measure at $x$, which for $B \in \mathcal{B}(X)$ is given by $\delta_x(B) = 1$, if $x \in B$, and $\delta_x(B) = 0$, else. The support $\supp(\mu)$ of a complex Radon measure $\mu \in \measSpace{X}$ is the largest closed set $C \subseteq X$ such that for every open set $B \in \mathcal{B}(X)$ satisfying $B \cap C \neq \emptyset$, it holds that $\mu(B \cap C) \neq 0$. We define the total variation (TV) norm $\normTV{\cdot}$ on $\measSpace{X}$ as $\normTV{\mu} \triangleq \abs{\mu}\!(X)$, where $\abs{\mu} \in \measSpace{X}$ is the total variation of $\mu$ given by
\begin{equation*}
	\forall B \in \mathcal{B}(X), \quad \abs{\mu}\!(B) \triangleq \sup_{\pi \in \Pi(B)} \sum_{A \in \pi} \abs{\mu(A)}
\end{equation*}
with $\Pi(B)$ denoting the set of all partitions of $B$. For a discrete measure $\mu = \sum_{k \in \idxSet}  \alpha_k\delta_{x_k} \in \measSpace{X}$, where $\{x_k\}_{k \in \idxSet}$ and $\{\alpha_k\}_{k \in \idxSet}$ are sequences in $X$ and $\C$, respectively, we have $\normTV{\mu} = \sum_{k \in \idxSet} \abs{\alpha_k}$, and we define the norm $\norm{\mu}_2 \triangleq \sqrt{\sum_{k \in \idxSet} \abs{\alpha_k}^2}$. $S_0(\R)$ stands for the Feichtinger algebra~\cite{Groechenig2000}. A set $\Lambda \subseteq \R^2$ is said to be discrete if for all $\lambda \in \Lambda$ one can find $\delta > 0$ such that $\norm{\lambda - \lambda'}_{\ell_2} > \delta$ for all $\lambda' \in \Lambda \!\setminus \{\lambda\}$. A set $\Lambda \subseteq \R^2$ is said to be uniformly discrete if $\inf\{\norm{\lambda - \lambda'}_{\ell^2} \colon \lambda, \lambda' \in \Lambda, \lambda \neq \lambda'\} > 0$.

\section{Problem formulation}

Throughout the paper, we let $\discreteSet \subseteq A\Z^2$ be a lattice, where $A \in \R^{2\times 2}$ is an invertible matrix.
We consider LTV systems characterized by the bounded linear operator $\linOp \colon S_0(\R) \rightarrow S_0(\R)$ defined as 
\begin{equation*}
	\forall t \in \R, \quad (\linOp\sig)(t) \triangleq \int_{\R^2} \sig(t - \delay)e^{-2\pi i\doppler t}\dmeas{\meas_\linOp}{\delay, \doppler},
\end{equation*}
where $\meas_\linOp \in \measSpace{\discreteSet}$. The vector space of all such operators is denoted by $\linOpSpace{\discreteSet}$ and is equipped with the norm $\norm{\linOp}_{\mathscr{H}} \triangleq \norm{\meas_\linOp}_2$. 
We pursue two principal goals. First, we want to establish conditions under which $\linOp \in \linOpSpace{\discreteSet}$ is identifiable, that is, one can find a signal $\sig \in S_0(\R)$, henceforth called probing signal, such that $\linOp\sig$ uniquely determines $\linOp$. The second goal is to find a method for recovering the triplets $\{a_k, \delay_k$, $\doppler_k\}_{k \in \idxSet}$ associated with $\linOp$ from the reponse of $\linOp$ to a probing signal. Throughout the paper, we consider stable identifiability (hereafter simply referred to as identifiability) which guarantees that $\linOp$ can not only be recovered from $\linOp\sig$, but small errors in $\linOp\sig$ also result in small errors in the identified operator. 
\begin{defn}[Stable identifiability]
	Let $\mathscr{I} \subseteq \linOpSpace{\discreteSet}$ be a set of operators. We say that $\mathscr{I}$ is identifiable if there exist $\sig \in S_0(\R)$ and constants $C_1, C_2$, $0 < C_1 \leq C_2 < \infty$, such that
	\begin{equation}
		C_1\norm{\linOp - \linOpBis}_\mathscr{H} \leq \norm{\linOp\sig - \linOpBis\sig}_{L^2} \leq C_2\norm{\linOp - \linOpBis}_\mathscr{H}
		\label{eq: identifiability set of operators}
	\end{equation}
	 for all $\linOp, \linOpBis \in \mathscr{I}$.
\end{defn}

The upper bound in~\eqref{eq: identifiability set of operators} is met trivially thanks to the following result.
\begin{prop}
	\label{prop: upper bound definition identifiability set of operator}
	Let $\sig \in S_0(\R)$.
	For all $\linOp, \linOpBis \in \linOpSpace{\discreteSet}$, it holds that \vspace{-0.1cm}
	\begin{equation*}
		\norm{\linOp\sig - \linOpBis\sig}_{L^2} \leq \norm{\linOp - \linOpBis}_\mathscr{H}\norm{\sig}_{L^2} .
	\end{equation*}
\end{prop}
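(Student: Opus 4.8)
The plan is to remove the two--operator formulation by linearity and then reduce the estimate to the fact that the time--frequency shifts $\mathcal{M}_\doppler\mathcal{T}_\delay$ are isometries of $L^2(\R)$. Since the assignment $\linOp\mapsto\meas_\linOp$ is linear, setting $\mathcal{L}\triangleq\linOp-\linOpBis$ gives $\meas_\mathcal{L}=\meas_\linOp-\meas_\linOpBis\in\measSpace{\discreteSet}$, hence $\mathcal{L}\in\linOpSpace{\discreteSet}$, $\mathcal{L}\sig=\linOp\sig-\linOpBis\sig$, and $\norm{\mathcal{L}}_\mathscr{H}=\norm{\linOp-\linOpBis}_\mathscr{H}$. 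Thus it suffices to prove $\normLtwo{\mathcal{L}\sig}\leq\norm{\mathcal{L}}_\mathscr{H}\normLtwo{\sig}$ for every $\mathcal{L}\in\linOpSpace{\discreteSet}$.

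Writing the definition of $\mathcal{L}$ through the modulation and translation operators, $(\mathcal{L}\sig)(t)=\int_{\R^2}(\mathcal{M}_\doppler\mathcal{T}_\delay\sig)(t)\,\dmeas{\meas_\mathcal{L}}{\delay,\doppler}$; that is, $\mathcal{L}\sig$ is the $L^2(\R)$--valued Bochner integral of the map $(\delay,\doppler)\mapsto\mathcal{M}_\doppler\mathcal{T}_\delay\sig$ against the finite complex measure $\meas_\mathcal{L}$ (this map is continuous in the $L^2$--norm because $\sig\in S_0(\R)$, hence strongly measurable and bounded). The single observation that closes the argument is that $\mathcal{M}_\doppler$ and $\mathcal{T}_\delay$ are unitary on $L^2(\R)$, so $\normLtwo{\mathcal{M}_\doppler\mathcal{T}_\delay\sig}=\normLtwo{\sig}$ for every $(\delay,\doppler)$. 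Pulling the $L^2(\R)$--norm inside the integral by Minkowski's integral inequality --- or, since $\meas_\mathcal{L}=\sum_\idx a_\idx\delta_{(\delay_\idx,\doppler_\idx)}$ is discrete, simply by the triangle inequality $\normLtwo{\sum_\idx a_\idx\mathcal{M}_{\doppler_\idx}\mathcal{T}_{\delay_\idx}\sig}\leq\sum_\idx\abs{a_\idx}\normLtwo{\mathcal{M}_{\doppler_\idx}\mathcal{T}_{\delay_\idx}\sig}$ --- yields $\normLtwo{\mathcal{L}\sig}\leq\normLtwo{\sig}\,\abs{\meas_\mathcal{L}}(\R^2)$, and $\abs{\meas_\mathcal{L}}(\R^2)$ is precisely the norm of $\meas_\mathcal{L}$ defining $\norm{\mathcal{L}}_\mathscr{H}$.

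I do not expect a genuine obstacle here: this is the intentionally easy half of the two--sided identifiability estimate, its whole content being the isometry property of the time--frequency shifts together with the triangle/Minkowski inequality. In particular, the lattice structure of $\discreteSet$ is not used at all for this upper bound --- it will be decisive only for the reverse inequality (the lower bound $C_1$), where the connection to interpolation in the Bargmann-Fock space enters. The only points that need (routine) care are that $\mathcal{L}\sig$ is a well--defined element of $L^2(\R)$ and that the interchange of norm and integral is legitimate, both of which are guaranteed by $\sig\in S_0(\R)$ and by $\meas_\mathcal{L}$ having finite total variation; and if one prefers to avoid vector--valued integration altogether, the plain triangle inequality already settles the discrete case at hand.
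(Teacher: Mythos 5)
Your linearity reduction to a single operator $\mathcal{L} \triangleq \linOp - \linOpBis$ is fine, but the closing step is a genuine gap. The isometry of $\mathcal{M}_\doppler\mathcal{T}_\delay$ plus the triangle (or Minkowski) inequality gives $\normLtwo{\mathcal{L}\sig} \leq \normTV{\meas_{\mathcal{L}}}\,\normLtwo{\sig} = \bigl(\sum_{\idx}\abs{a_\idx}\bigr)\normLtwo{\sig}$, i.e.\ a bound in terms of the \emph{total variation} norm. The paper, however, defines $\norm{\mathcal{L}}_{\mathscr{H}} \triangleq \norm{\meas_{\mathcal{L}}}_2 = \sqrt{\sum_{\idx}\abs{a_\idx}^2}$, so your assertion that $\abs{\meas_{\mathcal{L}}}(\R^2)$ ``is precisely the norm defining $\norm{\mathcal{L}}_{\mathscr{H}}$'' is a misreading of the definition. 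Since $\normTV{\meas}/\norm{\meas}_2$ is unbounded over $\measSpace{\discreteSet}$ (with $N$ equal weights the ratio is $\sqrt{N}$), the inequality you actually prove does not imply the stated one, nor any bound of the form $\normLtwo{\mathcal{L}\sig} \leq C\,\norm{\mathcal{L}}_{\mathscr{H}}$ with $C$ independent of $\mathcal{L}$. No triangle-inequality argument can produce an $\ell^2$-type bound on the coefficients.

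What is needed is the Bessel/synthesis bound for the Gabor system $\{\mathcal{M}_{\doppler_\idx}\mathcal{T}_{\delay_\idx}\sig\}$ indexed by a uniformly discrete set, i.e.\ $\norm{\sum_\idx a_\idx \mathcal{M}_{\doppler_\idx}\mathcal{T}_{\delay_\idx}\sig}_{L^2} \leq C \sqrt{\sum_\idx \abs{a_\idx}^2}$, and this is exactly what the paper's one-line proof invokes via \cite[Thm.~12]{Heil2007}: the hypotheses $\sig \in S_0(\R)$ and uniform discreteness of $\supp(\meas_\linOp - \meas_\linOpBis) \subseteq \discreteSet$ are precisely what make that theorem applicable. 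This also shows your remark that ``the lattice structure of $\discreteSet$ is not used at all for this upper bound'' is off the mark --- for a window that is merely in $L^2$ or a support set that is not uniformly separated, no such $\ell^2$ bound holds, so the separation and the Feichtinger-algebra assumption carry real weight here (one may even quibble that the paper's specific constant $\normLtwo{\sig}$ is more than Heil's Bessel bound literally delivers, but in any case your TV-norm estimate cannot be converted into a bound relative to $\norm{\cdot}_{\mathscr{H}}$).
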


\begin{proof}
	Follows directly by application of \cite[Thm.~12]{Heil2007} as $x \in S_0(\R)$ and $\supp(\mu_\linOp - \mu_\linOpBis) \subseteq \discreteSet$ is uniformly discrete.
\end{proof}

Proving that a set of operators $\mathscr{I} \subseteq \linOpSpace{\discreteSet}$ is identifiable therefore amounts to finding a probing signal $x \in S_0(\R)$ such that the lower bound in~\eqref{eq: identifiability set of operators} holds for all $\linOp, \linOpBis \in \mathscr{I}$.

\section{Main result}

We derive identifiability conditions for operators $\linOp \in \mathscr{H}(\discreteSet)$ in terms of the uniform Beurling density of $\supp(\mu_\linOp) = \{(\delay_\idx, \doppler_\idx)\}_{\idx \in \idxSet}$. The uniform Beurling density measures the average number of points in $\supp(\mu_\linOp)$ per unit cell of $\R^2$. We will actually have to work with lower and upper uniform Beurling densities.

\begin{defn}[Lower and upper uniform Beurling densities, {\cite[p.~346]{Beurling1989-1}\cite[p.~47]{Landau1966}}]
	Let $\Lambda$ be a uniformly discrete set in $\R^2$. Let $Q \subseteq \R^2$ be a compact set of measure $\lambda(Q) =1$ whose boundary has measure $0$. For $r >0$, let $n^-(\Lambda, rQ)$ and $n^+(\Lambda, rQ)$ be, respectively, the smallest and largest number of points of $\Lambda$ contained in any translate of~$rQ$. The quantities
	\begin{equation*}
		\liminf_{r \rightarrow \infty} \frac{n^-(\Lambda, rQ)}{r^2} \qquad \text{ and } \qquad \limsup_{r \rightarrow \infty} \frac{n^+(\Lambda, rQ)}{r^2}
	\end{equation*}
	do not depend on $Q$ and define the lower and upper uniform Beurling densities of $\Lambda$, respectively. These quantities are denoted by $D^-(\Lambda)$ and $D^+(\Lambda)$, respectively. If $D^-(\Lambda) = D^+(\Lambda)$, then $\Lambda$ is said to have uniform Beurling density $D(\Lambda) \triangleq D^-(\Lambda) = D^+(\Lambda)$.
\end{defn}

In the remainder of the paper, we will often deal with sets of complex numbers $\{z_k\}_{k \in \idxSet}$, whose lower and upper uniform Beurling densities we define to be the lower and upper uniform Beurling densities of the sets $\left\{(\Re{z_\idx}\!, \Im{z_\idx})\right\}_{\idx \in \idxSet}$ in $\R^2$.

We are now ready to state our main result.

\begin{thm}[Density criteria for identifiability]
	Let $\alpha$ be a positive number and assume that $\discreteSet$ has uniform Beurling density $D(\discreteSet) = 1/\det(A) \geq 2\alpha$.
	Define the set of operators $\mathscr{H}_\alpha(\discreteSet) \triangleq \left\{\linOp \in \linOpSpace{\discreteSet} \colon D^+(\supp(\meas_\linOp)) \leq \alpha\right\}$ and let $\gaussian(t) \triangleq \sqrt{B}e^{-\pi B^2t^2/2}$, $t \in \R$, where $B > 0$. 
	\begin{enumerate}[\hspace{1cm}a)]
		\item If $\alpha < 1/2$, then $\mathscr{H}_\alpha(\discreteSet)$ is identifiable by $\gaussian$. \label{enum: identifiability}
		\item If $\alpha > 1/2$, then $\mathscr{H}_\alpha(\discreteSet)$ is not identifiable. \label{enum: non-identifiability}
	\end{enumerate}
	\label{thm: identifiability condition}
\end{thm}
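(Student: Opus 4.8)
The plan is to reduce everything to the short-time Fourier transform (STFT) with Gaussian window and exploit the connection to sampling and interpolation in the Bargmann-Fock space. The key observation is that, for a Gaussian probing signal $\gaussian$, the response $(\linOp\gaussian)(t) = \int_{\R^2} \gaussian(t-\delay)e^{-2\pi i \doppler t}\dmeas{\meas_\linOp}{\delay,\doppler}$ can be related, via a change of variables and the Fourier transform, to the STFT $\STFT\gaussian$ of $\gaussian$ sampled against the measure $\meas_\linOp$. Since the STFT of a Gaussian with a Gaussian window is, up to a Gaussian weight and a chirp, the Bargmann transform, the map $\linOp \mapsto \linOp\gaussian$ becomes (after an isometric identification) evaluation of entire functions in $\bargmannSpace$ on the set $\supp(\meas_\linOp) \subseteq \discreteSet$. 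Concretely, I would show $\norm{\linOp\gaussian}_{L^2}^2 \asymp \sum_{\idx} \abs{a_\idx}^2 e^{-c\abs{z_\idx}^2}\abs{F(z_\idx)}^2$ for an appropriate entire $F$, so that the lower bound in~\eqref{eq: identifiability set of operators} with $\linOpBis = 0$ — and hence, by linearity, for all $\linOp,\linOpBis \in \mathscr{H}_\alpha(\discreteSet)$ — is equivalent to a sampling inequality in $\bargmannSpace$ on $\supp(\meas_\linOp)$.

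For part~\ref{enum: identifiability}, given $\alpha < 1/2$, every $\linOp \in \mathscr{H}_\alpha(\discreteSet)$ has $D^+(\supp(\meas_\linOp)) \leq \alpha < 1/2$. The density $1/2$ is (after the normalization induced by the parameter $B$ in $\gaussian$) exactly Seip's critical density for sampling in the Bargmann-Fock space: a uniformly discrete set with upper density strictly below the critical value is a set of sampling, i.e.\ a set of uniqueness admitting a stable lower frame-type bound, uniformly over all such sets of a given density bound~\cite{Seip1992}. One subtlety is that the relevant set $\supp(\meas_\linOp)$ varies with $\linOp$, so I need the sampling constant to depend only on $\alpha$ (and on the uniform separation inherited from $\discreteSet$), not on the particular support set; this uniformity is available because $\supp(\meas_\linOp) \subseteq \discreteSet$ is uniformly discrete with a fixed separation constant and has density $\leq \alpha$, and Seip's lower bound is uniform over such families. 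Combining the sampling inequality with the norm identification yields $C_1 \norm{\linOp}_{\mathscr H}^2 \leq \norm{\linOp\gaussian}_{L^2}^2$ with $C_1 = C_1(\alpha,B) > 0$, and the upper bound is Proposition~\ref{prop: upper bound definition identifiability set of operator}.

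For part~\ref{enum: non-identifiability}, suppose $\alpha > 1/2$ and fix any probing signal $\sig \in S_0(\R)$. I would argue by contradiction: if $\sig$ stably identified $\mathscr{H}_\alpha(\discreteSet)$, then by the same STFT computation (now with general window $\sig$ rather than a Gaussian) the map $\meas \mapsto \linOp\sig$ would give a stable lower bound on a quantity governed by samples of $\STFT\sig$ on every uniformly discrete subset of $\discreteSet$ of density $\leq \alpha$. Picking a subset $\Lambda \subseteq \discreteSet$ with $D^-(\Lambda) > 1/2$ (possible since $D(\discreteSet) \geq 2\alpha > 1$ and $\alpha > 1/2$), the necessary-density (Landau-type) half of the Bargmann-Fock theory — a set with lower density exceeding the critical value $1/2$ cannot be a set of uniqueness, hence supports a nonzero function with arbitrarily small sampled energy — produces a nonzero $\meas_0 \in \measSpace{\Lambda} \subseteq \measSpace{\discreteSet}$ with $\norm{\meas_0}_2 = 1$ but $\norm{\linOp_0 \sig}_{L^2}$ arbitrarily small, contradicting the lower bound in~\eqref{eq: identifiability set of operators} (taking $\linOpBis = 0$). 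The point here is that the obstruction is window-independent: it is the redundancy of the Gabor-type system $\{\mathcal{M}_{\doppler_\idx}\mathcal{T}_{\delay_\idx}\sig\}$ at density above $1$ in the delay-Doppler plane, equivalently the Balian-Low / density obstruction, which no choice of $\sig \in S_0(\R)$ can circumvent.

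The main obstacle I anticipate is the first step done carefully for \emph{general} windows in part~\ref{enum: non-identifiability}: for the Gaussian window the STFT-to-Bargmann dictionary is exact and Seip's theorems apply verbatim, but for an arbitrary $\sig \in S_0(\R)$ one must instead invoke the density theory for general Gabor frames on non-lattice sets in the time-frequency plane (the Ramanathan-Steger / Christensen-Deng-Heil necessary density condition), and match the normalization so that ``density $1$ in the $(\delay,\doppler)$-plane'' is precisely the Nyquist-type threshold. Handling the measure-valued (rather than $\ell^2$-sequence) formulation and confirming that $\norm{\linOp}_{\mathscr H} = \norm{\meas_\linOp}_2$ is the right norm to make~\eqref{eq: identifiability set of operators} equivalent to a clean sampling inequality — including the interchange of integration and summation justified by $\sig \in S_0(\R)$ and uniform discreteness of $\discreteSet$ — is the other place where care is needed.
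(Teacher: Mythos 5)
There is a genuine gap, and it sits exactly at the point the theorem is about. In part a) you reduce to $\mathcal{K} = 0$ ``and hence, by linearity, for all $\mathcal{H}, \mathcal{K} \in \mathscr{H}_\alpha(\Xi)$'' — but $\mathscr{H}_\alpha(\Xi)$ is \emph{not} closed under differences: for $\mathcal{H}, \mathcal{K} \in \mathscr{H}_\alpha(\Xi)$ the measure $\mu_{\mathcal{H}} - \mu_{\mathcal{K}}$ is supported on $\supp(\mu_{\mathcal{H}}) \cup \supp(\mu_{\mathcal{K}})$, whose upper density can be as large as $2\alpha$. This doubling is the sole reason the threshold in the theorem is $1/2$: the critical density of the Bargmann--Fock space in the paper's normalization is $1$, not $1/2$ (the map $(\tau,\nu) \mapsto \tau B/\sqrt{2} - i\nu\sqrt{2}/B$ is area-preserving, so no $B$-dependent renormalization of densities occurs). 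You have also swapped the two halves of Seip's theory: a uniformly discrete set with $D^+ < 1$ is a set of \emph{interpolation} (equivalently, the Gabor system over it is a Riesz sequence); sets of \emph{sampling} require lower density strictly \emph{greater} than $1$, so your statement ``upper density below critical $\Rightarrow$ set of sampling'' is false, and the claimed identity $\|\mathcal{H}g\|_{L^2}^2 \asymp \sum_k |a_k|^2 e^{-c|z_k|^2}|F(z_k)|^2$ is not what $\mathcal{H}g$ is (it is a combination of coherent states, not point evaluations of one entire function). The paper's actual route: for a pair $\mathcal{H} \neq \mathcal{K}$ the union of supports has density $\leq 2\alpha < 1$, the interpolation theorem produces $\varphi \in \mathcal{F}(\C)$, hence a signal $x$ with $(\mathcal{V}_g x)(\tau_k,\nu_k) = \overline{\alpha_k}/\|\eta\|_2$, and pairing $x$ against $\mathcal{H}g - \mathcal{K}g$ (Fubini plus Cauchy--Schwarz) yields the lower bound in \eqref{eq: identifiability set of operators} with constant $1/\|x\|_{L^2}$.

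The same confusion breaks part b). Your contradiction uses a \emph{single} operator ($\mathcal{K} = 0$) supported on a set of lower density $> 1/2$; but for $1/2 < \alpha \leq 1$ such a set has density below $1$, and the Gabor system over it can perfectly well satisfy the lower Riesz bound (with a Gaussian window it does, by Seip--Wall\-st\'en), so no contradiction arises — the necessary density condition for Gabor Riesz sequences with an $S_0$ window is $D^+ \leq 1$, not $\leq 1/2$, and high-density sets \emph{are} sets of uniqueness in Fock space, contrary to your ``cannot be a set of uniqueness'' step. The missing idea is again the non-linearity of the class: the paper constructs pairs $\mathcal{H}_n, \mathcal{K}_n \in \mathscr{H}_\alpha(\Xi)$ with \emph{disjoint} supports, each of density tending to $\alpha$, and opposite-sign weights, so that $\mu_{\mathcal{H}_n} - \mu_{\mathcal{K}_n}$ lives on a set of density tending to $2\alpha$; identifiability would force that union to carry a Gabor Riesz sequence for the (arbitrary) probing signal, and the necessary condition $D^+ \leq 1$ then gives $2\alpha \leq 1$ in the limit. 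Your instinct that the obstruction is window-independent is right, and the tool you name (Ramanathan--Steger/Christensen--Deng--Heil-type density bounds) is the one the paper invokes, but it must be applied to the difference of two admissible operators, not to a single one.
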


Theorem~\ref{thm: identifiability condition} says that operators $\linOp \in \mathscr{H}(\discreteSet)$ are identifiable if they are ``sparse enough'' in the sense of the upper uniform Beurling density of $\supp(\mu_\linOp)$ satisfying $D^+(\supp(\mu_\linOp)) < 1/2$.
The proof of Theorem \ref{thm: identifiability condition} is given in Section~\ref{sec: proofs} and relies on the theory of interpolation in the Bargmann-Fock space $\bargmannSpace$ of entire functions $\complexFuncOne$ for which
\begin{equation*}
	\norm{\complexFuncOne}_\mathcal{F} \triangleq \left(\int_\C \abs{\complexFuncOne(z)}^2 e^{-\pi \abs{z}^2} \mathrm{d}z\right)^{1/2} < \infty.
\end{equation*}
This powerful theory was developed by Seip and Brekke in~\cite{Seip1992, Seip1992-1, Seip1992-2, Seip1992-3} and led to strong results in Gabor theory establishing density criteria on sets $\Lambda \triangleq \{(\delay_k, \doppler_k)\}_{k \in \idxSet}$  for $\{\mathcal{M}_{\doppler_k}\mathcal{T}_{\delay_k}g\}_{k \in \idxSet}$ to form a Riesz sequence in $L^2(\R)$. 

The result that comes closest to our Theorem~\ref{thm: identifiability condition} is \cite[Thm.~4.1]{Kozek2005} summarized next. Consider the set of Gabor frame operators $\mathscr{S}_{a, b} \triangleq \left\{\mathcal{S}^{a, b}_{g, h} \colon g \in L^2(\R), h \in W(\R)\right\}$, where $a, b > 0$, $W(\R)$ is the Wiener space
\begin{equation*}
	W(\R) \triangleq \left\{f \in L^2(\R) \colon \sum_{k \in \Z} \norm{f \cdot 1_{[k, k+1)}}_{L^\infty} < \infty \right\},
\end{equation*}
and $\mathcal{S}^{a, b}_{g, h}$ is the linear operator mapping $f \in L^2(\R)$ to
\begin{equation*}
	\mathcal{S}^{a, b}_{g, h}f \triangleq (ab)^{-1}\sum_{m, n \in \Z} \innerProd{h}{\mathcal{M}_{m/a}\mathcal{T}_{n/b}g}\mathcal{M}_{m/a}\mathcal{T}_{n/b}f
\end{equation*}
in $L^2(\R)$. $\mathscr{S}_{a, b}$ is shown in \cite{Kozek2005} to be identifiable if and only if $ab \leq 1$. This statement is a universal result in the sense of holding irrespectively of whether the set of delay-Doppler shift pairs $\{(\delay_k, \doppler_k)\}_{\idx \in \idxSet}$ occupies the entire lattice $a^{-1}\Z \times b^{-1}\Z$ or just parts of it. In contrast, our result takes into account the lattice occupation density. The factor of two difference between \cite[Thm.~4.1]{Kozek2005} and Theorem~\ref{thm: identifiability condition} in the critical density stems from the fact that for $\linOp, \linOpBis \in \mathscr{H}_\alpha(\discreteSet)$, we have $\linOp - \linOpBis \in \mathscr{H}_{2\alpha}(\discreteSet)$, whereas the set of operators $\mathscr{S}_{a, b}$ is linear, implying that for $\linOp, \linOpBis \in \mathscr{S}_{a, b}$, $\linOp - \linOpBis \in \mathscr{S}_{a,b}$ as well.

\section{Proof of Theorem~\ref{thm: identifiability condition}}
\label{sec: proofs}

We first collect some basic facts about interpolation in the Bargmann-Fock space~\cite{Seip1992, Seip1992-1, Seip1992-2, Seip1992-3}.

\subsection{Preparatory material}

\begin{defn}[Set of interpolation,~\cite{Seip1992}]
	Let $\Gamma \triangleq \{z_\idx\}_{\idx \in \idxSet} \subseteq \C$ be a discrete set. If for every sequence $\{w_\idx\}_{\idx \in \idxSet} \in \ell^2(\idxSet)$ there exists a function $\complexFuncOne \in \bargmannSpace$ such that $e^{-\pi \abs{z_\idx}^2/2}\complexFuncOne(z_\idx) = w_\idx$, for all $\idx \in \idxSet$, then $\Gamma$ is said to be a set of interpolation for $\bargmannSpace$.
\end{defn}

\begin{thm}[Density theorem for interpolation in the \mbox{Bargmann}-Fock space, {\cite[Thm.~1.2]{Seip1992}, \cite[Thm.~1.2]{Seip1992-2}}]
	A discrete set $\Gamma \subseteq \C$ is a set of interpolation for $\bargmannSpace$ if and only if it is uniformly discrete and $D^+(\Gamma) < 1$.
	\label{thm: density theorem for interpolation}
\end{thm}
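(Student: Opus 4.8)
The plan is to prove the two implications separately, and throughout to reformulate the interpolation property in terms of the reproducing kernel structure of $\bargmannSpace$. Recall that $\bargmannSpace$ has reproducing kernel $K(z, w) = e^{\pi\overline{w}z}$, so the normalized kernels $\kappa_w \triangleq e^{\pi\overline{w}z - \pi\abs{w}^2/2}$ satisfy $\norm{\kappa_w}_\mathcal{F} = 1$ and $\innerProd{\complexFuncOne}{\kappa_w}_\mathcal{F} = e^{-\pi\abs{w}^2/2}\complexFuncOne(w)$. Hence the interpolation condition $e^{-\pi\abs{z_\idx}^2/2}\complexFuncOne(z_\idx) = w_\idx$ reads $\innerProd{\complexFuncOne}{\kappa_{z_\idx}}_\mathcal{F} = w_\idx$, and $\Gamma$ is a set of interpolation precisely when the moment problem associated with $\{\kappa_{z_\idx}\}_{\idx\in\idxSet}$ is solvable for every $\ell^2$ datum, i.e., when $\{\kappa_{z_\idx}\}_{\idx\in\idxSet}$ is a Riesz-Fischer sequence. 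By the open mapping theorem, solvability automatically comes with a uniform norm bound, so I may work with this bounded (Riesz-type) formulation throughout.

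For necessity, I would first establish uniform discreteness. A direct computation gives $\abs{\innerProd{\kappa_{z_j}}{\kappa_{z_\idx}}_\mathcal{F}} = e^{-\pi\abs{z_j - z_\idx}^2/2}$, so if two nodes approach each other the corresponding normalized kernels become nearly parallel and the Riesz-Fischer lower bound is violated; hence an interpolating set must be uniformly discrete. For the density bound $D^+(\Gamma) < 1$, I would use a Landau-type dimension-counting argument: over a large square the subspace of $\bargmannSpace$ effectively localized there has dimension asymptotically equal to its area (density one in the present normalization), and a Riesz-Fischer family of kernels indexed by the nodes inside the square cannot outnumber this dimension. This yields $D^+(\Gamma)\leq 1$; upgrading to the strict inequality is the delicate classical point and requires exploiting the stability (openness) of the Riesz-Fischer lower bound, as an interpolating set retains the property under the adjunction of a positive density of extra points, which forces a strictly positive gap.

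For sufficiency I would use Hörmander's weighted $L^2$ estimate for the $\overline\partial$-operator. Given $\ell^2$ data $\{w_\idx\}$, uniform discreteness lets me place disjoint cutoffs around the nodes and form a smooth quasi-interpolant $F$ that matches the prescribed values and mimics the reproducing kernel locally, so that $F$ has controlled weighted norm and $h \triangleq \overline\partial F$ is supported in the separated transition annuli. I would then solve $\overline\partial u = h$ against a plurisubharmonic weight of the form $\pi\abs{z}^2 + 2\sum_\idx \log\abs{z - z_\idx}$ (suitably regularized near the singularities): the logarithmic poles force $u(z_\idx) = 0$, so that $\complexFuncOne \triangleq F - u$ is entire, interpolates the data, and, because the singular weight dominates $\pi\abs{z}^2$ near the nodes, lies in $\bargmannSpace$. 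The role of the density hypothesis is precisely to make this weight admissible: distributing logarithmic point masses of density $D^+(\Gamma)$ consumes a proportion $D^+(\Gamma)$ of the background curvature $\Delta(\pi\abs{z}^2)$, and $D^+(\Gamma) < 1$ leaves a strictly positive residual lower bound on the Laplacian of the smoothed weight, which is exactly what Hörmander's estimate requires to control $u$.

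The main obstacle in both directions is the density condition rather than uniform discreteness. On the sufficiency side, the crux is the potential-theoretic construction of the singular weight with a uniform strictly positive curvature margin extracted from $D^+(\Gamma) < 1$, together with the estimate of $\int_\C \abs{h}^2 e^{-\phi}\,\mathrm{d}z$ precisely over the annuli where the weight is most singular. On the necessity side, the crux is promoting $D^+(\Gamma) \leq 1$ to the strict inequality $D^+(\Gamma) < 1$, which is where the full force of Seip's extremal-function and stability analysis enters; the dimension count alone only delivers the non-strict bound.
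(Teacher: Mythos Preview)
The paper does not prove this theorem at all: it is quoted as preparatory material from Seip's work \cite{Seip1992, Seip1992-2} and used as a black box in the proof of Theorem~\ref{thm: identifiability condition}. There is therefore no ``paper's own proof'' to compare your proposal against.

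That said, your outline is a reasonable sketch of how Seip's result is actually established in the literature. The reproducing-kernel reformulation and the uniform-discreteness argument via $\abs{\innerProd{\kappa_{z_j}}{\kappa_{z_\idx}}_\mathcal{F}} = e^{-\pi\abs{z_j - z_\idx}^2/2}$ are correct. For sufficiency, the H\"ormander $\overline\partial$ approach you describe is a valid route (developed by Berndtsson, Ortega-Cerd\`a, and others), though Seip's original argument in \cite{Seip1992-2} proceeds instead via explicit Weierstrass-type products built from a perturbed lattice close to $\Gamma$; both methods exploit the gap $1 - D^+(\Gamma) > 0$ in essentially the same potential-theoretic way. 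For necessity, you correctly flag that the Landau dimension count gives only $D^+(\Gamma) \leq 1$ and that promoting this to a strict inequality is the genuinely hard step; Seip handles it in \cite{Seip1992} by a comparison with the lattice $\Z + i\Z$ and a perturbation/stability argument, which your proposal alludes to but does not carry out. If you were actually asked to supply a proof, that strict-inequality step is where the real work would lie.
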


The Bargmann-Fock space is isomorphic to $L^2(\R)$, since the Bargmann transform defined for $\realFunc \in L^2(\R)$ as
\begin{equation*}
	\forall z \in \C, \quad (\bargmann\realFunc)(z) \triangleq 2^{1/4}e^{-\pi z^2/2}\int_{-\infty}^\infty \realFunc(u) e^{2\pi uz - \pi u^2}\mathrm{d}u
\end{equation*}
is an isometric isomorphism from $L^2(\R)$ to $\bargmannSpace$. The Bargmann transform is closely related to the short-time Fourier transform (STFT) with respect to a Gaussian window~\cite[Def.~3.4.1]{Groechenig2000}. Specifically, the STFT of $\sig \in L^2(\R)$ with Gaussian window function $g(t) \triangleq  \sqrt{B}e^{-\pi B^2t^2/2}$, $t \in \R$, defined as
\begin{equation*}
	\forall (\delay, \doppler) \in \R^2, \quad (\mathcal{V}_g\sig)(\delay, \doppler) \triangleq \int_{-\infty}^\infty \sig(t)g(t - \delay)e^{-2\pi i\doppler t}\mathrm{d}t
\end{equation*}
can be expressed in terms of the Bargmann transform as
\begin{equation}
	\forall (\delay, \doppler) \in \R^2, \quad (\STFT\sig)(\delay, \doppler) = (\bargmann f)(z) e^{-\pi \abs{z}^2/2}e^{-\pi i\delay\doppler},
	\label{eq: relation STFT Bargmann transform}
\end{equation}
where we set $f(u) \triangleq (\sqrt{2}/B)^{1/2} \sig(u\sqrt{2}/B)$ and $z \triangleq \delay B/\sqrt{2} - i\doppler \sqrt{2}/B$.

\begin{defn}[Riesz sequence, {\cite[Chap.~3]{Christensen2008}}]
	Let $\{f_n\}_{n \in \Z}$ be a sequence of functions in $L^2(\R)$. 
	The sequence $\{f_n\}_{n \in \Z}$ is a Riesz sequence in $L^2(\R)$ if and only if there exist constants $C_1, C_2$, $0 < C_1 \leq C_2 < \infty$, such that
	\begin{equation*}
		C_1\sqrt{\sum_{n \in \Z} \abs{c_n}^2} \leq \norm{\sum_{n \in \Z} c_n f_n}_{L^2}  \leq C_2\sqrt{\sum_{n \in \Z} \abs{c_n}^2}
	\end{equation*}
	for all $\{c_n\}_{n \in \Z} \in \ell^2(\Z)$. 
\end{defn}

An important consequence of the relation between the STFT and the Bargmann transform is the following: the set $\{\mathcal{M}_{\doppler_\idx}\mathcal{T}_{\delay_\idx}\gaussian\}_{\idx \in \idxSet}$ forms a Riesz sequence in $L^2(\R)$ if and only if the set $\Gamma \triangleq \{z_\idx\}_{\idx \in \idxSet}$, where $z_\idx \triangleq \delay_\idx B/\sqrt{2} - i\doppler_\idx \sqrt{2}/B$, for all $\idx \in \idxSet$, is a set of interpolation for $\bargmannSpace$.

\subsection{Proof of Theorem~\ref{thm: identifiability condition}, Statement \ref{enum: identifiability})}
\label{sec: proof theorem identifiability condition}

Our proof is inspired by the technique used in~\cite[Thm.~1.1]{Donoho1991} to prove that discrete complex measures of the form $\mu = \sum_{k \in \Z} a_k\delta_{kT}$ are uniquely characterized by their Fourier transform $\hat{\mu}(f) = \sum_{k \in \Z} a_k e^{-2\pi ikTf}$ for $\abs{f} \leq f_c$ if the support of $\mu$ obeys $D^+(\supp(\mu)) < f_c$.

Let $\alpha < 1/2$ and $\linOp, \linOpBis \in \mathscr{H}_\alpha(\discreteSet)$. If $\linOp = \linOpBis$, then \eqref{eq: identifiability set of operators} holds trivially. We consider the case $\linOp \neq \linOpBis$ in the following. The support of the measure $\eta \triangleq \meas_\linOp - \meas_\linOpBis$ is contained in $\Lambda \triangleq \supp(\meas_\linOp) \hspace{0.5pt}\ensuremath \raisebox{-0.5pt}{$\cup$}\supp(\meas_\linOpBis)$. Since $\Lambda \subseteq \discreteSet$ is uniformly discrete, we have
\begin{equation}
	D^+(\Lambda) \leq D^+(\supp(\meas_\linOp)) + D^+(\supp(\meas_\linOpBis)) \leq 2\alpha < 1.
	\label{eq: upper bound density union}
\end{equation}
We write $\Lambda = \{(\delay_\idx, \doppler_\idx)\}_{\idx \in \idxSet}$ and define the corresponding set $\Gamma \triangleq \{z_\idx\}_{\idx \in \idxSet}$, where $z_\idx \triangleq x_\idx + iy_\idx$, $x_\idx \triangleq \delay_\idx B/\sqrt{2}$, and $y_\idx \triangleq -\doppler_\idx \sqrt{2}/B$, for all $\idx \in \idxSet$. 
Let $Q \triangleq [0,1]^2$ and $Q_B \triangleq [0, \sqrt{2}/B]\times[0, B/\sqrt{2}]$ and note that $\lambda(Q) = \lambda(Q_B) = 1$ and the boundaries of $Q$ and $Q_B$ both have measure $0$. Moreover, for $r > 0$, $(x_\idx, y_\idx) \in rQ$ if and only if $(\delay_\idx, \doppler_\idx) \in rQ_B$. The largest number of points of $\Gamma$ found in any translate of $rQ$ then equals the largest number of points of $\Lambda$ found in any translate of $rQ_B$ for $r  > 0$.
Therefore, we have $D^+(\Gamma) = D^+(\Lambda)$. From \eqref{eq: upper bound density union}, it then follows by application of Theorem~\ref{thm: density theorem for interpolation} that there exists a function $\complexFuncOne \in \bargmannSpace$ solving the interpolation problem $e^{-\pi\abs{z_\idx}^2/2}\complexFuncOne(z_\idx) = w_\idx$, for all $\idx \in \idxSet$, where the sequence $\{w_\idx\}_{\idx \in \idxSet} \in \ell^2(\idxSet)$ is defined as
\begin{equation}
	\forall \idx \in \idxSet, \quad w_\idx = \frac{\overline{\alpha_\idx}}{\sqrt{\sum_{k \in \idxSet} \abs{\alpha_\idx}^2}} e^{\pi i\delay_{\idx_0}\doppler_{\idx_0}}, \vspace{-0.1cm}
	\label{eq: sequence wk}
\end{equation}
with $\alpha_\idx \triangleq \eta(\{(\delay_\idx, \doppler_\idx)\})$, for all $\idx \in \idxSet$. Note that $\alpha_k \neq 0$ for at least one $\idx \in \idxSet$, say $\idx_0 \in \idxSet$, since $\linOp \neq \linOpBis$. It therefore follows that $\sum_{\idx \in \idxSet} \abs{\alpha_\idx}^2 > 0$, which ensures that the $w_k$ in~\eqref{eq: sequence wk} are well-defined. 
By $\sum_{k \in \idxSet} \abs{w_k}^2 = 1 < \infty$, the sequence $\{w_\idx\}_{\idx \in \idxSet}$ is in $\ell^2(\idxSet)$.
Since the Bargmann transform $\bargmann \colon L^2(\R) \rightarrow \bargmannSpace$ is an isomorphism and $\varphi \in \bargmannSpace$, there exists an $\realFunc \in L^2(\R)$ such that $\complexFuncOne = \bargmann\realFunc$. Now, defining $\sig(t) \triangleq (B/\sqrt{2})^{1/2}\realFunc(tB/\sqrt{2})$, $t \in \R$, we make use of the relation \eqref{eq: relation STFT Bargmann transform} between the STFT and the Bargmann transform to write
\begin{equation*}
	\forall \idx \in \idxSet, \quad \complexFuncOne(z_\idx) = (\STFT\sig)(\delay_\idx, \doppler_\idx) e^{\pi \abs{z_\idx}^2/2} e^{\pi i\delay_\idx\doppler_\idx}.
\end{equation*}
It therefore follows that
\begin{equation}
	\forall \idx \in \idxSet, \quad (\STFT\sig)(\delay_\idx, \doppler_\idx) = \frac{\overline{\alpha_\idx}}{\sqrt{\sum_{k \in \idxSet} \abs{\alpha_k}^2}}.
	\label{eq: STFT at the points}
\end{equation}
By construction of $\sig$, it holds that
\begin{align*}
	\int_{\R^2} (\STFT\sig)(\delay, \doppler) \dmeas{\eta}{\delay, \doppler} &= \sum_{\idx \in \idxSet} \alpha_\idx(\STFT\sig)(\delay_\idx, \doppler_\idx) = \!\sqrt{\sum_{\idx \in \idxSet} \abs{\alpha_\idx}^2} \\
		&\hspace{-1cm}= \norm{\eta}_2 = \norm{\meas_\linOp - \meas_\linOpBis}_2 = \norm{\linOp - \linOpBis}_\mathscr{H},
\end{align*}
where the last equality is by the definition of $\norm{\cdot}_\mathscr{H}$.
On the other hand, we have
\begin{align}
	\int_{\R^2} &(\STFT\sig)(\delay, \doppler) \dmeas{\eta}{\delay, \doppler} \notag\\
		&= \int_{\R^2} \left(\int_{-\infty}^\infty \sig(t)\gaussian(t - \delay)e^{-2\pi i\doppler t}\mathrm{d}t\right) \dmeas{\eta}{\delay, \doppler} \notag \\
		&= \int_{-\infty}^\infty \sig(t)\left(\int_{\R^2} \gaussian(t - \delay)e^{-2\pi i \doppler t}\dmeas{\eta}{\delay, \doppler}\right) \mathrm{d}t \label{eq: proof application of fubini} \\
		&= \int_{-\infty}^\infty \sig(t)\Big((\linOp\gaussian)(t) - (\linOpBis\gaussian)(t)\Big) \mathrm{d}t \notag \\  
		&\leq \norm{\sig}_{L^2} \norm{\linOp\gaussian - \linOpBis\gaussian}_{L^2}, \label{eq: proof identifiability with gaussian Cauchy Schwarz}
\end{align}
where we used Fubini's theorem to get \eqref{eq: proof application of fubini}. The conditions for Fubini's theorem are satisfied as
\begin{align*}
	 \int_{\R^2}\left(\int_{-\infty}^\infty \abs{\sig(t)\gaussian(t - \delay)e^{-2\pi i\doppler t}}\mathrm{d}t\right) \dmeas{\abs{\eta}\!}{\delay, \doppler} \\
	 	\leq \norm{\sig}_{L^2}\norm{\gaussian}_{L^2}\normTV{\eta} < \infty.
\end{align*}
Finally, \eqref{eq: proof identifiability with gaussian Cauchy Schwarz} follows from the Cauchy-Schwarz inequality. Since $\sqrt{\sum_{k \in \idxSet} \abs{\alpha_k}^2} \leq \sum_{k \in \idxSet} \abs{\alpha_k} = \normTV{\eta}$ and $\normTV{\eta} < \infty$ by $\eta \in \measSpace{\discreteSet}$, we have $\sqrt{\sum_{k \in \idxSet} \abs{\alpha_k}^2} < \infty$. Combining this with $\alpha_{\idx_0} \neq 0$, it follows from \eqref{eq: STFT at the points} that $(\STFT\sig)(\delay_{\idx_0}, \doppler_{\idx_0}) \neq 0$. Since 
\begin{equation*}
	\forall (\delay, \doppler) \in \R^2, \quad (\STFT\sig)(\delay, \doppler) = \varphi(z)e^{-\pi\abs{z}^2/2}e^{-\pi i\delay\doppler},
\end{equation*}
where $z = \delay B/\sqrt{2} -i\doppler \sqrt{2}/B$, and since $\varphi$ is an entire function, $\STFT\sig$ is continuous. This implies that one can find a neighborhood $V$ of $(\delay_{k_0}, \doppler_{k_0})$ such that $(\STFT\sig)(\delay, \doppler) \neq 0$ for all $(\delay, \doppler) \in V$. Therefore, $\norm{\STFT\sig}_{L^2}^2 \geq \int_V \abs{(\STFT\sig)(\delay, \doppler)}^2 \mathrm{d}\delay\mathrm{d}\doppler > 0$. Since $\norm{\gaussian}_{L^2} = 1$, we have $\norm{\sig}_{L^2} = \norm{\STFT\sig}_{L^2} > 0$. This allows us to write \eqref{eq: proof identifiability with gaussian Cauchy Schwarz} in the form
\begin{equation*}
	C_1 \norm{\linOp - \linOpBis}_\mathscr{H} \leq \norm{\linOp\gaussian - \linOpBis\gaussian}_{L^2},
\end{equation*}
where $C_1 \triangleq 1/\norm{\sig}_{L^2} > 0$, thereby completing the proof.

\subsection{Proof of Theorem~\ref{thm: identifiability condition}, Statement~\ref{enum: non-identifiability})}
\label{sec: proof theorem non-identifiability condition with a gaussian}

By contraposition, we show that $\alpha \leq 1/2$ if $\mathscr{H}_\alpha(\discreteSet)$ is identifiable. To this end, let us assume that one can find a probing signal $x \in S_0(\R)$ and constants $C_1, C_2$, $0 < C_1 \leq C_2 < \infty$, such that \eqref{eq: identifiability set of operators} holds for all $\linOp, \linOpBis \in \mathscr{H}_\alpha(\discreteSet)$. We then construct sequences $\{\linOp_n\}_{n \in \N}$ and $\{\linOpBis_n\}_{n \in \N}$ of operators in $\mathscr{H}_\alpha(\discreteSet)$ that have $\Lambda_{\linOp_n} \triangleq \supp(\mu_{\linOp_n})$ and $\Lambda_{\linOpBis_n} \triangleq \supp(\mu_{\linOpBis_n})$ disjoint for all $n \in \N$, and show that if \eqref{eq: identifiability set of operators} is to hold for $\linOp = \linOp_n$ and $\linOpBis = \linOpBis_n$, for $n \rightarrow \infty$, we necessarily have $\alpha \leq 1/2$. Specifically, with $d \triangleq D(\discreteSet) = 1/\det(A)$, we construct $\{\Lambda_{\linOp_n}\}_{n \in \N}$ as follows. By density of $\mathbb{Q}$ in $\R$, we can find, for every $n \in \N$, a $q_n \in \mathbb{Q}$ such that
\begin{equation}
	\frac{\alpha}{d} - \frac{1}{n} \leq q_n \leq \frac{\alpha}{d}.
	\label{eq: sequence in Q}
\end{equation}
By assumption, $d \geq 2\alpha$, and hence $0 \leq q_n \leq 1/2$ for all $n \geq \lfloor \alpha/d\rfloor + 1 \reversetriangleq n_0$. We can therefore write $q_n = \ell_n/m_n$, where $m_n, \ell_n \in \N$ and $0 < \ell_n \leq m_n/2$ for all $n \geq n_0$. Next, define the set $\Lambda_{\linOp_n} \triangleq A\Delta_{\linOp_n}$, where 
\begin{equation*}
	\Delta_{\linOp_n} \triangleq \bigcup_{r = 0}^{\ell_n-1} (m_n\Z + r)^2
\end{equation*}
is a subset of $\Z^2$ with uniform Beurling density $D(\Delta_{\linOp_n}) = q_n$. The set $\Lambda_{\linOp_n}$ is then a subset of $\discreteSet$ with uniform Beurling density $D(\Lambda_{\linOp_n}) = q_n/\det(A) = q_nd \leq \alpha$ for all $n \geq n_0$. Moreover, it follows from \eqref{eq: sequence in Q} that $D(\Lambda_{\linOp_n})$ converges to $\alpha$ as $n \rightarrow \infty$. We next construct the sequence $\{\Lambda_{\linOpBis_n}\}_{n \in \N}$. Again invoking the density of $\mathbb{Q}$ in $\R$, we can find $q_n' \in \mathbb{Q}$ such that
\begin{equation}
	\frac{\alpha-1/n}{(1-q_n)d} \leq q_n' \leq \frac{\alpha}{(1-q_n)d},
	\label{eq: sequence in Q 2}
\end{equation}
for all $n \in \N$. As $d \geq 2\alpha$ and $q_n \leq 1/2$, we have $0 \leq q_n' \leq 1$ for all $n \geq \lfloor 1/\alpha\rfloor + 1 \reversetriangleq n_1$.
We define $\Lambda_{\linOpBis_n} \triangleq A\Delta_{\linOpBis_n}$, where  
\begin{equation*}
	\Delta_{\linOpBis_n} \triangleq \bigcup_{r = \ell_n}^{m_n-1}\bigcup_{s = 0}^{\ell_n'-1} (m_nm_n'\Z + m_ns + r)^2
\end{equation*}
is a subset of $\Z^2 \setminus \Delta_{\linOp_n}$ with uniform Beurling density $D(\Delta_{\linOpBis_n}) = q_n'(1-q_n)d \leq \alpha$. The set $\Lambda_{\linOpBis_n}$ is then a subset of $\discreteSet \setminus \Lambda_{\linOpBis_n}$ with uniform Beurling density $q_n'(1-q_n)/\det(A) = q_n'(1- q_n)d \leq \alpha$ for all $n \geq n_1$.
It follows from \eqref{eq: sequence in Q 2} that the sequence $D(\Lambda_{\linOpBis_n})$ converges to $\alpha$. Now let $n \geq \max\{n_0, n_1\}$. As $\Lambda_{\linOp_n}$ and $\Lambda_{\linOpBis_n}$ are disjoint and both have uniform Beurling density, the set $\Lambda_n \triangleq \Lambda_{\linOp_n} \cup \Lambda_{\linOpBis_n}$ has uniform Beurling density $D(\Lambda_n) = q_nd + q_n'(1 - q_n)d$. We write $\Lambda_n = \{(\delay_{k, n}, \doppler_{k, n})\}_{k \in \idxSet_n}$, $\idxSet_{\linOp_n} \triangleq \{k \in \idxSet_n \colon (\delay_{k, n}, \doppler_{k, n}) \in \Lambda_{\linOp_n}\}$, and $\idxSet_{\linOpBis_n} \triangleq \{k \in \idxSet_n \colon (\delay_{k, n}, \doppler_{k, n}) \in \Lambda_{\linOpBis_n}\}$.
Next, let $\{a_k\}_{k \in \idxSet_n}$ be a sequence in $\ell^1(\idxSet_n)$ and define the measures
\begin{equation*}
	\mu_{\linOp_n} \triangleq \sum_{k \in \idxSet_{\linOp_n}} \!\!a_k\delta_{\delay_{k, n}, \doppler_{k, n}}, \qquad  \mu_{\linOpBis_n} \triangleq -\sum_{k \in \idxSet_{\linOpBis_n}} \!\!a_k\delta_{\delay_{k, n}, \doppler_{k, n}}. 
\end{equation*}
The corresponding operators $\linOp_n$ and $\linOpBis_n$ are in $\mathscr{H}_\alpha(\discreteSet)$. Evaluating \eqref{eq: identifiability set of operators} with $\linOp = \linOp_n$ and $\linOpBis = \linOpBis_n$ then yields
\begin{equation*}
	C_1\sqrt{\sum_{k \in \idxSet_n} \!\!\abs{a_k}^2} \leq \norm{\sum_{k \in \idxSet_n} a_k \mathcal{M}_{\doppler_{k, n}}\mathcal{T}_{\delay_{k, n}}x}_{L^2} \!\!\!\!\leq C_2 \sqrt{\sum_{k \in \idxSet_n} \!\!\abs{a_k}^2}.
\end{equation*}
We can therefore conclude that $\{\mathcal{M}_{\doppler_{k, n}}\mathcal{T}_{\delay_{k, n}}x\}_{k \in \idxSet_n}$ is a Riesz sequence in $L^2(\R)$. By application of Theorem~\cite[Thm.~13-d)]{Heil2007}, it is then necessary that $D^+(\Lambda_n) \leq 1$, thus implying $q_nd + q_n'(1 - q_n)d \leq 1$.
Taking the limit $n \rightarrow \infty$ yields $2\alpha \leq 1$ and completes the proof.

\section{Identification algorithms}

In wireless or radar applications contributions to the received signal corresponding to distant reflectors and/or scatterers can often be neglected thanks to path loss.
It is therefore sensible to take the delays and Doppler shifts to lie on a compact set $[\delay_\mathrm{min}, \delay_\mathrm{max}] \times [\doppler_\mathrm{min}, \doppler_\mathrm{max}]$. Combined with the standing assumption of the $(\delay_k, \doppler_k)$ being supported on a lattice, the input-output relation~\eqref{eq: radio communication channel} reduces to
\begin{equation*}
	\forall t \in \R, \quad (\linOp\sig)(t) = \sum_{k = 1}^K a_k\sig(t - \delay_k)e^{-2\pi i \doppler_k t}
\end{equation*}
with $K$ finite.
This implies that the set $\supp(\mu_\linOp)$ is finite and hence $D^+(\supp(\mu_\linOp)) = 0$. Theorem~\ref{thm: identifiability condition} then tells us that $\linOp$ can be identified by the probing signal 
\begin{equation*}
	\forall t \in \R, \quad x(t) \triangleq \sqrt{B}\exp\!\left(-\frac{\pi B^2 (t - T)^2}{2}\right).
\end{equation*}
Furthermore, in practice, we only have access to samples of $\linOp\sig$ restricted to a finite time interval, say $[\delay_\mathrm{min}, \delay_\mathrm{min}+T]$. The identification problem therefore reduces to the identification of $\linOp$ from the samples $r_m \triangleq (\linOp\sig)(t_m)$ with $t_m \triangleq  \delay_\mathrm{min}+ mT/M$ for  $m \in \{0, 1, \ldots, M-1\}$. Straightforward manipulations reveal that
\begin{equation*}
	r_m = \lambda_{M-m}\sum_{k = 1}^K \alpha_kz_k^{M-m}, 
\end{equation*}
where we set
\begin{align*}
	&\lambda_{M-m} \triangleq \sqrt{B}\exp\!\left(-\frac{\pi B^2T^2(M-m)^2}{2M^2}\right) \\
	&\alpha_k \triangleq a_k \exp\!\left(-\frac{\pi B^2(\delay_k - \delay_\mathrm{min})^2}{2}\right)e^{-2\pi i\doppler_k T}e^{-2\pi i\doppler_k \delay_\mathrm{min}} \\
	&z_k \triangleq \exp\!\left(-\frac{\pi B^2T(\delay_k - \delay_\mathrm{min})}{M}\right) e^{2\pi i \doppler_k T/M}
\end{align*}
for $m \in \{0, 1, \ldots, M-1\}$. Therefore, letting $y_m \triangleq r_{M-m}/\lambda_{m}$, we obtain
\begin{equation*}
 	y_m = \sum_{k = 1}^K \alpha_k z_k^m
\end{equation*}
for all $m \in \{1, 2, \ldots, M\}$. We have therefore reduced the LTV system identification problem to the problem of identifying cisoids. A variety of subspace algorithms like ESPRIT \cite{Roy1986}, MUSIC~\cite{Schmidt1986}, or the matrix pencil method~\cite{Hua1990} can be used to solve this problem, provided that $M \geq K+1$. For an in-depth discussion of subspace cisoid estimation algorithms we refer the reader to~\cite{Stoica2005}. In the presence of noise, the success of subspace methods is highly dependent on the condition number of the Vandermonde matrix with poles $\{z_k\}_{k = 1}^K$ \cite{Potts2013}. In \cite{Bazan2000}, it was shown that these Vandermonde matrices are relatively well conditioned, provided that the nodes lie inside the unit disk, remain close to the unit circle, but are not extremely close to each other.

\section*{Acknowledgment}
The authors would like to thank David Stotz and Reinhard Heckel for inspiring discussions, as well as one of the reviewers for helpful comments.



%
\bibliographystyle{IEEEtran} 
\bibliography{ref}

\begin{thebibliography}{10}
\providecommand{\url}[1]{#1}
\csname url@samestyle\endcsname
\providecommand{\newblock}{\relax}
\providecommand{\bibinfo}[2]{#2}
\providecommand{\BIBentrySTDinterwordspacing}{\spaceskip=0pt\relax}
\providecommand{\BIBentryALTinterwordstretchfactor}{4}
\providecommand{\BIBentryALTinterwordspacing}{\spaceskip=\fontdimen2\font plus
\BIBentryALTinterwordstretchfactor\fontdimen3\font minus
  \fontdimen4\font\relax}
\providecommand{\BIBforeignlanguage}[2]{{%
\expandafter\ifx\csname l@#1\endcsname\relax
\typeout{** WARNING: IEEEtran.bst: No hyphenation pattern has been}%
\typeout{** loaded for the language `#1'. Using the pattern for}%
\typeout{** the default language instead.}%
\else
\language=\csname l@#1\endcsname
\fi
#2}}
\providecommand{\BIBdecl}{\relax}
\BIBdecl

\bibitem{Kailath1962}
T.~Kailath, ``Measurements on time-variant communication channels,'' \emph{IRE
  Trans.~Inf.~Theory}, vol.~8, no.~5, pp. 229--236, Sep. 1962.

\bibitem{Bello1969}
P.~A. Bello, ``Measurement of random time-variant linear channels,'' \emph{IEEE
  Trans.~Inf.~Theory}, vol.~15, no.~4, pp. 469--475, Jul. 1969.

\bibitem{Kozek2005}
W.~Kozek and G.~E. Pfander, ``Identification of operators with bandlimited
  symbols,'' \emph{SIAM J.~Math.~Anal.}, vol.~37, no.~3, pp. 867--888, Jan.
  2005.

\bibitem{Heckel2013}
R.~Heckel and H.~B\"olcskei, ``Identification of sparse linear operators,''
  \emph{IEEE Trans.~Inf.~Th.}, vol.~59, no.~2, pp. 7985--8000, Dec. 2013.

\bibitem{Bajwa2011}
W.~U. Bajwa, K.~Gedalyahu, and Y.~C. Eldar, ``Identification of parametric
  underspread linear systems and super-resolution radar,'' \emph{IEEE
  Trans.~Sig.~Proc.}, vol.~59, no.~6, pp. 2548--2561, June 2011.

\bibitem{Heckel2014}
\BIBentryALTinterwordspacing
R.~Heckel, V.~I. Morgenshtern, and M.~Soltanolkotabi, ``Super-resolution
  radar,'' \emph{arXiv}, Nov. 2014. [Online]. Available:
  \url{http://arxiv.org/abs/1411.6272}
\BIBentrySTDinterwordspacing

\bibitem{Donoho1991}
D.~L. Donoho, ``Super-resolution via sparsity constraints,'' \emph{SIAM
  J.~Math.~Anal.}, vol.~23, no.~5, pp. 1303--1331, Sep. 1992.

\bibitem{Beurling1989-2}
A.~Beurling, ``{V.~I}nterpolation for an interval on {$\R^1$}. {1.~A} density
  theorem. {Mittag-Leffler Lectures on Harmonic Analysis},'' in \emph{The
  Collected Works of Arne Beurling: Volume~2, Harmonic Analysis}, L.~Carleson,
  P.~Malliavin, J.~Neuberger, and J.~Werner, Eds.\hskip 1em plus 0.5em minus
  0.4em\relax Boston, MA, USA: Birkh\"auser, 1989, pp. 351--359.

\bibitem{Beurling1989-1}
------, ``Balayage of {Fourier-Stieltjes} transforms,'' in \emph{The Collected
  Works of Arne Beurling: Volume~2, Harmonic Analysis}, L.~Carleson,
  P.~Malliavin, J.~Neuberger, and J.~Werner, Eds.\hskip 1em plus 0.5em minus
  0.4em\relax Boston, MA, USA: Birkh\"auser, 1989, pp. 341--350.

\bibitem{Seip1992}
K.~Seip, ``Density theorems for sampling and interpolation in the
  {B}argmann-{F}ock space,'' \emph{Bulletin of the AMS}, vol.~26, no.~2, pp.
  322--328, Apr. 1992.

\bibitem{Seip1992-1}
------, ``Density theorems for sampling and interpolation in the
  {B}argmann-{F}ock space {I},'' \emph{J.~Reine und Angew.~Math.}, vol. 429,
  pp. 91--106, 1992.

\bibitem{Seip1992-2}
------, ``Density theorems for sampling and interpolation in the
  {B}argmann-{F}ock space {II},'' \emph{J.~Reine und Angew.~Math.}, vol. 429,
  pp. 107--113, 1992.

\bibitem{Seip1992-3}
S.~Brekke and K.~Seip, ``Density theorems for sampling and interpolation in the
  {B}argmann-{F}ock space {III},'' \emph{Mathematica Scandinavica}, vol.~73,
  pp. 112--126, 1993.

\bibitem{Grip2013}
N.~Grip, G.~E. Pfander, and P.~Rashkov, ``A time-frequency density criterion
  for operator identification,'' \emph{Sampling Theory in Signal and Image
  Processing}, vol.~12, no.~1, pp. 1--19, Jan. 2013.

\bibitem{Groechenig2000}
K.~H. Gr\"ochenig, \emph{Foundations of Time-Frequency Analysis}, ser. Appl.
  Numer. Harmonic Anal., J.~J. Benedetto, Ed.\hskip 1em plus 0.5em minus
  0.4em\relax Boston, MA, USA: Birkh\"auser, 2000.

\bibitem{Heil2007}
C.~Heil, ``History and evolution of the density theorem for {G}abor frames,''
  \emph{J.~Fourier Anal.~and Appl.}, vol.~13, no.~2, pp. 113--166, Apr. 2007.

\bibitem{Landau1966}
H.~J. Landau, ``Necessary density conditions for sampling and interpolation of
  certain entire functions,'' \emph{Acta Mathematica}, vol. 117, no.~1, pp.
  37--52, Jul. 1966.

\bibitem{Christensen2008}
O.~Christensen, \emph{Frames and Bases -- An Introductory Course}, ser. Applied
  and Numerical Harmonic Analysis, J.~J. Benedetto, Ed.\hskip 1em plus 0.5em
  minus 0.4em\relax Boston, MA, USA: Birkh\"auser, 2008.

\bibitem{Roy1986}
R.~Roy, A.~Paulraj, and T.~Kailath, ``{ESPRIT} -- {A} subspace rotation
  approach to estimation of parameters of cisoids in noise,'' \emph{IEEE
  Trans.~ Acoust., Speech, and Sig.~Proc.}, vol.~34, no.~5, pp. 1340--1342,
  Oct. 1986.

\bibitem{Schmidt1986}
R.~Schmidt, ``Multiple emitter location and signal parameter estimation,''
  \emph{IEEE Trans.~Ant.~Prop.}, vol.~34, no.~3, pp. 276--280, Mar. 1986.

\bibitem{Hua1990}
Y.~Hua and T.~K. Sarkar, ``Matrix pencil method for estimating parameters of
  exponentially damped and undamped sinusoids in noise,'' \emph{IEEE
  Trans.~Acoust., Speech, and Sig.~Proc.}, vol.~38, no.~5, pp. 814--824, May
  1990.

\bibitem{Stoica2005}
P.~Stoica and R.~Moses, \emph{Spectral Analysis of Signals}.\hskip 1em plus
  0.5em minus 0.4em\relax Upper Saddle River, NJ, USA: Prentice Hall, 2005.

\bibitem{Potts2013}
D.~Potts and M.~Tasche, ``Parameter estimation for nonincreasing exponential
  sums by {P}rony-like methods,'' \emph{Linear Alg.~and its Appl.}, vol. 439,
  no.~4, pp. 1024--1039, Aug. 2013.

\bibitem{Bazan2000}
F.~S.~V. Baz\'an, ``Conditioning of rectangular {V}andermonde matrices with
  nodes in the unit disk,'' \emph{SIAM J.~Matrix Anal.~and Appl.}, vol.~21,
  no.~2, pp. 679--693, Jan. 2000.

\end{thebibliography}

\end{document}